\theoremstyle{plain} 
\newtheorem{Theorem}{Theorem}
\newtheorem{Lemma}{Lemma}
\newtheorem{Remark}{Remark}
\theoremstyle{TheoremNoColon} 
\newtheoremstyle{break}
  {}
  {}
  {\normalfont}
  {}
  {\bfseries}
  {. }
  {}
  {}
\theoremstyle{break}
\def\BibTeX{{\rm B\kern-.05em{\sc i\kern-.025em b}\kern-.08em
    T\kern-.1667em\lower.7ex\hbox{E}\kern-.125emX}}
\begin{document}

\title{Rate-Distortion Function of the Stochastic Block Model\\
\thanks{This research was funded in whole or in part by EPSRC (EP/T02612X/1) and U.S. Army Research Laboratory and the U.S. Army Research Office (W911NF-22-1-0070). For the purpose of Open Access, the author has applied a CC BY public copyright licence to any Author Accepted Manuscript (AAM) version arising from this submission.}
}

\author{\IEEEauthorblockN{Martin Wachiye Wafula\IEEEauthorrefmark{1}, Praneeth Kumar Vippathalla\IEEEauthorrefmark{2},
Justin Coon\IEEEauthorrefmark{3} and Mihai-Alin Badiu\IEEEauthorrefmark{4}}

\IEEEauthorblockA{Dept. of Engineering Science,
University of Oxford\\
Oxford, United Kingdom\\
Email: \IEEEauthorrefmark{1}martin.wafula,
\IEEEauthorrefmark{2}praneeth.vippathalla,
\IEEEauthorrefmark{3}justin.coon,
\IEEEauthorrefmark{4}mihai.badiu@eng.ox.ac.uk}}

\maketitle
\thispagestyle{plain}
\pagestyle{plain}

\begin{abstract}

The stochastic block model (SBM) is extensively used to model networks in which users belong to certain communities. In recent years, the study of information-theoretic compression of such networks has gained attention, with works primarily focusing on lossless compression. In this work, we address the lossy compression of SBM graphs by characterizing the rate-distortion function under a Hamming distortion constraint. Specifically, we derive the conditional rate-distortion function of the SBM with community membership as side information. We approach this problem as the classical Wyner-Ziv lossy problem by minimising mutual information of the graph and its reconstruction conditioned on community labels. Lastly, we also derive the rate-distortion function of the Erd\H{o}s-R\'enyi (ER) random graph model.

\end{abstract}

\begin{IEEEkeywords}
Stochastic block model, lossy compression, rate-distortion function
\end{IEEEkeywords}

\section{Introduction}

Over the past two decades, there has been a significant interest in analysing large graphs representing relationships between various data entities, including those in machine learning, medicine, communications, social and information network analysis, and transportation networks \cite{Eisenman2016ParallelChallenges, Batarfi2015LargeEvaluation, Lakshmanan2016SocialObvious, Besta2018SurveyRepresentations}. Compressing these large graphs has become a topic of interest for academia and industry due to its potential benefits, such as reducing the number of input-output operations, increasing the speed of graph analysis, reducing the amount of data communicated over the network \cite{Besta2019SlimAnalytics} and for efficient storage and transmission of graphs themselves \cite{Badiu2023StructuralGraphs}.

In this work, we examine the stochastic block model (SBM), which partitions the nodes into communities and assigns independent edge probabilities based on the community membership of the nodes. 
The SBM and its variants have been extensively studied and reviewed, particularly in community recovery problems, \cite{Holland1983StochasticSteps, Fortunato2016CommunityGuide, Lee2019AClustering, Saad2018CommunityModel, Abbe2017CommunityDevelopments, Zhao2021OnInformation, Sima2021ExactInformation}. The SBM can be utilised to analyse the community structure of Wireless Sensor Networks (WSNs) as it can capture the natural clustering of nodes based on spatial proximity or functionality, allowing for the investigation of the performance of routing protocols, energy management schemes and other network services that rely on the community structure of the network. For instance, by utilising the community structure to minimise the number of nodes needed to transmit data, energy-efficient routing protocols can be designed for WSNs. Clustered networks outperform non-clustered ones \cite{Jung2023SatelliteApplications}. 

Motivated by their importance in practical applications, many works studied the information-theoretic compression of these graphs. In \cite{Abbe2016GraphClusters}, Abbe addressed the problem of lossless compression of graphs with clusters, while Bhatt et al. dealt with universal compression of SBM \cite{Bhatt2021UniversalModels}. Recently, Han et al. obtained the partitioned structural entropy of SBM similar to Choi's and Szpankowski's \cite{Choi2012CompressionExperiments} structural entropy of the ER graph,  which generalises the structural entropy of unlabelled graphs and encodes the partition information \cite{Han2022StructuralModels}. To consider data on SBM graphs, Asadi, Abbe, and Verd\'u \cite{Asadi2017CompressingClusters} explored data compression limits on graphs using community dependencies, providing optimal compressor lengths when the community signal is strong. They termed the source model data block model.  


There has also been research on lossy graph compression, primarily focusing on heuristics and algorithms with little emphasis on information-theoretic approaches. Preserving graph structure with fidelity is a significant research area \cite{Bustin2017OnMatrices}. Recent work explores compressing directed graphs by maintaining local structure \cite{Bustin2022OnGraphs}, and there is a quantitative definition for compressibility based on network properties like transitivity and degree heterogeneity \cite{Lynn2021QuantifyingNetworks}. SLIMgraph is a recent lossy compression algorithm that has gained attention for its promising results on experimental data \cite{Besta2019SlimAnalytics}. The technique leverages compression kernels and adopts a triangle reduction method, which involves the removal of an edge of the triangle with a specific probability. This removal is done while ensuring that the same edge is not selected for removal again and that edges belonging to multiple triangles, and hence more likely to be part of multiple shortest paths, are not removed more often than those belonging to only one triangle. Another example is shrinkage approximation, a graph summarization algorithm commonly used in call graphs that prunes edges, not nodes \cite{Henecka2015LossyGraphs}. Both SLIMGraph and shrinkage approximation algorithms raise the question of determining the amount of distortion that can be tolerated. Specifically, in these cases, what is the maximum number of edges that can be pruned or discarded during lossy compression while still achieving acceptable performance levels? So far, to the best of our knowledge, there is no work on lossy compression of the SBM. Our work is a step towards addressing this problem. 



Our study uses rate-distortion theory to determine limits for lossy compression algorithms, specifically focusing on the stochastic block model (SBM). We derive a conditional rate with community labels as side information by leveraging the Wyner-Ziv problem\cite{Wyner1976TheDecoder, Yamamoto1982WynerCorresp.}. We extend the findings to both homogeneous and inhomogeneous ER graph models. We apply a simple Hamming distortion measure to the graph edges, which makes practical sense for worst-case edge removal in algorithms such as SLIMGraph. The rate-distortion function serves as a boundary for compression algorithms, utilizing expected distortion as a metric to identify the maximum number of removable edges.


\section{Preliminaries}
In this paper, we restrict our attention to an undirected and unweighted graph $ G = (\mathcal{V}, \mathcal{E})$  constructed on a fixed vertex set $\mathcal{V} = [n]:= \{ 1, 2, \dots, n\}$ where $\mathcal{E} \subseteq \mathcal{V} \times \mathcal{V}$ is the set of edges. Moreover, we assume that there are no self-loops, i.e., $(v,v) \notin \mathcal{E}$ for all $v \in \mathcal{V}$.  It is sometimes useful to represent a graph in terms of its adjacency matrix representation. For a graph $G_n$, we set $E_{i, j} =1$ if the vertices $i$ and $j$ share an edge, and $E_{i, j} =0$ otherwise. As there are no self-loops, which means that $E_{ii}=0$ for $i \in [n]$, the values $\{E_{i, j}: 1\leq i<j \leq n\}$ specify the adjacency matrix uniquely, and hence the graph. Next, we present a few fundamental results related to the entropy of various random graph models. 

\subsection{Erd\H{o}s-R\'enyi Graph Model} 
 In the Erd\H{o}s-R\'enyi (random) graph model $\mathcal{G}(n,p)$ model, we connect any two distinct vertices in $[n]$ by an edge with probability $p$ independently of the other edges. The ER model is the canonical example, to begin with, in most studies on graphs. One observes that under this model, the probability of a graph $G_n$ containing $k$ edges is $P(G) = p^k(1-p)^{\binom{n}{2} - k}.$

 In a generalisation of the ER model, which is commonly referred to as an inhomogeneous ER graph model or a generalised binomial graph,  the probability of connecting two vertices $i$ and $j$ with $i<j$ by an edge is $p_{i, j}$. 
 This model was first defined by Kovalenko in \cite{Kovalenko1971TheoryGraphs}. Under this model, the probability of a graph $G_n$ specified by $E_{i, j} \in \{0,1\}$ for $i<j$ is given  by
 $$P(G) = \prod_{i<j}p_{i, j}^{E_{i, j}}(1-p_{i, j})^{1 - E_{i, j}}$$.
\begin{Lemma}\label{EntropyERGeneral}
The entropy of an inhomogeneous ER graph model with edge probabilities $\{p_{i, j}: 1 \leq i <j \leq n\}$ is \begin{equation*}
    H(G) = \sum_{i<j} h_2(p_{i, j}).
\end{equation*} 
In particular, the entropy of the ER graph model $\mathcal{G}(n,p)$ is $ H(G) = {\binom{n}{2}}h_2(p).$
Here $h_2(t) := - t\log_2 t - (1-t)\log_2 (1-t)$ is the binary entropy function.
\end{Lemma}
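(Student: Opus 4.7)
The plan is to exploit the product form of the joint distribution already displayed just above the lemma, which encodes the fact that under the inhomogeneous ER model the edge indicators $\{E_{i,j}: 1 \leq i < j \leq n\}$ are mutually independent Bernoulli variables with parameters $p_{i,j}$. Since the adjacency matrix (and hence the graph $G$) is uniquely determined by these $\binom{n}{2}$ indicators, computing $H(G)$ reduces to computing the joint entropy of an independent collection of Bernoulli random variables.

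First I would make the identification $H(G) = H(\{E_{i,j}: i<j\})$ explicit, citing the preliminaries which note that the subdiagonal entries specify the graph. Next, I would invoke the standard fact that for mutually independent random variables the joint entropy is additive, giving
\begin{equation*}
H(G) \;=\; H\bigl(\{E_{i,j}\}_{i<j}\bigr) \;=\; \sum_{i<j} H(E_{i,j}).
\end{equation*}
Independence itself is immediate from the factorised form $P(G) = \prod_{i<j} p_{i,j}^{E_{i,j}}(1-p_{i,j})^{1-E_{i,j}}$ stated in the text. Finally, since $E_{i,j} \sim \mathrm{Bernoulli}(p_{i,j})$, we have $H(E_{i,j}) = h_2(p_{i,j})$ by definition of the binary entropy, which yields the claimed formula.

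The homogeneous case is then a one-line specialization: setting $p_{i,j} = p$ for every pair $i<j$ gives $\binom{n}{2}$ identical terms, hence $H(G) = \binom{n}{2} h_2(p)$.

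There is essentially no obstacle here; the result is a direct consequence of the independence structure built into the model and the additivity of entropy under independence. The only thing worth being careful about is making clear that the graph $G$ and the collection $\{E_{i,j}\}_{i<j}$ carry exactly the same information (no self-loops, undirected), so that $H(G)$ really coincides with the joint entropy of the edge indicators rather than being a strict lower bound.
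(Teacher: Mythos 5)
Your proof is correct and is exactly the standard argument the paper implicitly relies on (the lemma is stated without proof there): the factorised form of $P(G)$ gives mutual independence of the edge indicators, additivity of entropy gives the sum of binary entropies, and the homogeneous case follows by specialization. Your care in identifying $H(G)$ with the joint entropy of $\{E_{i,j}\}_{i<j}$ is appropriate and complete.
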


\subsection{Stochastic Block Model (SBM)}

{\color{black}

Let $\mathbf{X} = (X_1, \dotsc, X_n) \in [k]^n$ be a random vector whose components are identically and independently distributed (i.i.d.) according to $\mathbf{p} = (p_1, \dotsc, p_k)$. The components describe the community membership of $n$ nodes. Let $\mathbf{W}$ be a $k \times k$ symmetric matrix where the entry $w_{\ell, m}$ specifies the probability that a node in community  $\ell$ is connected to another node in community $m$. In the  \emph{stochastic block model}, denoted by $(\mathbf{X}, G_n) \sim {\text{SBM}(n, \mathbf{p},\mathbf{W})}$, we generate  a random graph $G_n$ on $n$ nodes by drawing an edge between nodes $i$ and $j$ with probability $w_{X_i, X_j}$ (using the community membership vector $\mathbf{X}$). Notice that given a community label vector $\mathbf{X}$, the graph $G_n$ (or equivalently, $\{E_{i, j}: 1\leq i<j \leq n\}$) is generated according to the inhomogeneous ER model with edge probabilities  $\{w_{X_i,X_j}: 1 \leq i <j \leq n\}$.

   

\begin{Lemma}[Abbe, 2016\cite{Abbe2016GraphClusters}] \label{EntropySBM} Let $(\mathbf{X}, G_n) \sim{\text{SBM}(n, \mathbf{p},\mathbf{W})}$. The entropy of the SBM graph is 
\begin{equation}
    H(G_n) = \binom{n}{2}\mathbf{p}^Th_2(\mathbf{W})\mathbf{p} + \mathcal{O}(n),\label{EntropySBMeqn}
\end{equation}
where $h_2(\mathbf{W})$ is an $n \times n$ matrix whose $(i,j)$-th entry is $h_2(w_{i, j})$.
The entropy of the SBM graph conditioned on the community labels is given by $H(G_n|\mathbf{X}) =  \binom{n}{2}\mathbf{p}^Th_2(\mathbf{W})\mathbf{p}.$
\end{Lemma}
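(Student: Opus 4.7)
The plan is to handle the conditional entropy first and then deduce the unconditional entropy by combining it with the chain rule and a crude $\mathcal{O}(n)$ bound on the community-label entropy.

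For $H(G_n \mid \mathbf{X})$, I would condition on a specific realization $\mathbf{X} = \mathbf{x}$. Given $\mathbf{x}$, the construction of the SBM is exactly an inhomogeneous ER model on $[n]$ with edge probabilities $p_{i,j} = w_{x_i, x_j}$. Lemma~\ref{EntropyERGeneral} then gives
\begin{equation*}
H(G_n \mid \mathbf{X} = \mathbf{x}) = \sum_{i<j} h_2(w_{x_i, x_j}).
\end{equation*}
Averaging over $\mathbf{X}$ and using the fact that the $X_i$ are i.i.d.\ with law $\mathbf{p}$, each pair $(X_i, X_j)$ with $i<j$ has joint distribution $p_\ell p_m$, so
\begin{equation*}
\mathbb{E}\bigl[h_2(w_{X_i, X_j})\bigr] = \sum_{\ell, m} p_\ell p_m\, h_2(w_{\ell, m}) = \mathbf{p}^T h_2(\mathbf{W}) \mathbf{p}.
\end{equation*}
Summing over the $\binom{n}{2}$ edge slots yields the claimed closed form for $H(G_n \mid \mathbf{X})$.

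For the unconditional entropy, I would use the identity $H(G_n) = H(G_n \mid \mathbf{X}) + I(G_n; \mathbf{X})$, equivalently $H(G_n) = H(\mathbf{X}) + H(G_n \mid \mathbf{X}) - H(\mathbf{X} \mid G_n)$. Since $\mathbf{X} \in [k]^n$ is i.i.d., $H(\mathbf{X}) = n H(\mathbf{p}) \leq n \log_2 k$, and trivially $0 \leq H(\mathbf{X} \mid G_n) \leq H(\mathbf{X})$. Thus the correction $I(G_n; \mathbf{X})$ is sandwiched between $0$ and $n \log_2 k$, which is $\mathcal{O}(n)$. Combining with the exact expression for $H(G_n \mid \mathbf{X})$ gives the first claim.

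There is no real obstacle: the only step deserving any care is recognising that the quadratic form $\mathbf{p}^T h_2(\mathbf{W}) \mathbf{p}$ emerges cleanly from the per-edge expectation, and that the gap between $H(G_n)$ and $H(G_n \mid \mathbf{X})$ is controlled by $H(\mathbf{X})$, which scales linearly in $n$ while the main term scales as $\binom{n}{2}$. I would be slightly more cautious if one wanted to replace $\mathcal{O}(n)$ by a sharper asymptotic (e.g.\ an explicit constant) — that would require analysing $H(\mathbf{X} \mid G_n)$ more carefully — but for the stated bound the naive inequality suffices.
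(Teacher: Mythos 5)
Your proof is correct and follows essentially the same route the paper relies on: the paper states this lemma without proof (citing Abbe), but the identical ingredients appear in its proof of Theorem~\ref{RDFSBMcommunityknown} — the factorization $P_{\{E_{i,j}:i<j\}\mid\mathbf{X}}=\prod_{i<j}P_{E_{i,j}\mid X_i,X_j}$ giving $H(G_n\mid\mathbf{X})=\sum_{i<j}H(E_{i,j}\mid X_i,X_j)=\binom{n}{2}\mathbf{p}^T h_2(\mathbf{W})\mathbf{p}$, and the bound $I(G_n;\mathbf{X})\leq H(\mathbf{X})=nH(\mathbf{p})=\mathcal{O}(n)$ for the unconditional entropy. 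No gaps.
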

}

 \subsection{Rate-Distortion Theory}
In this section, we will review rate-distortion theory, which is primarily used to study lossy compression. Consider two measurable spaces $(\mathcal{U},\mathcal{B_U})$ and $(\hat{\mathcal{U}},\mathcal{B_{\hat{U}}})$ for the source and its compressed version, respectively, equipped with a nonnegative measurable mapping $d \colon \mathcal{U \times \hat{U}} \to \mathbb{R}\cup \{+\infty\}$. This mapping is referred to as the distortion function. Suppose a distortion level $D$ and probability measure $P$  on $(\mathcal{U},\mathcal{B_U})$ are given. In lossy compression, one wishes to represent the elements $u$ of $\mathcal{U}$ using the elements $\hat{u}$ of the $\hat{\mathcal{U}}$ such that the average of the distortions $d(u,\hat{u})$ is at most $D$. The aim is to achieve this in a parsimonious way, for example, by using as few elements of $\hat{\mathcal{U}}$ as possible or by associating a fixed or variable-length codeword to each element of $\hat{\mathcal{U}}$ and then minimizing the average length of the codewords corresponding to a (lossy) representation. 

The focus of the classical rate-distortion theory is sequential (random) sources $X^n$ taking values in $\mathcal{X}^{n}$ using the sequences from   $\hat{\mathcal{X}}^{n}$ for $n \geq 1$. Given a single-letter distortion measure $d': \mathcal{X} \times \hat{\mathcal{X}} \to \mathbb{R}$,  the distortion measure for sequences is defined as $d(x^n, {\hat{x}}^n) =  \sum_{i=1}^{n}d'(x_i, \hat{x}_{i})/n$.
We say that a rate $R$ is achievable at a distortion level $D$ if there exists a sequence of codes $g_n \colon \mathcal{X}^{n} \to \hat{\mathcal{X}}^{n}$ such that $\limsup_{n \to \infty} \log|g_n(\mathcal{X}^{n})|/n \leq R$ and $\limsup_{n \to \infty}\mathbb{E}[d( X^n, g(X^{n}))] \leq D$. The goal is to characterize the smallest achievable rate, denoted by $R_X(D)$, at a distortion level $D$. In his seminal work \cite{Shannon1959CodingConv}, Shannon showed that for an i.i.d. source $X^n$,
\begin{equation}\label{eq:rdf:shannon}
        R_X(D) = \min_{p_{\hat{X}|X:\mathbbm{E}[d(X, \hat{X})] \leq D}} I(X; \hat{X}).
\end{equation}

The rate-distortion function $R(D)$ is a non-increasing convex function of $D \in [0,\infty)$ \cite{Cover2006ElementsProcessing}. In \cite{Wyner1976TheDecoder}, Wyner and Ziv studied the lossy compression in the presence of side information $Y^n \in \mathcal{Y}^{n}$, which is correlated with the source $X^n$. In this case, the map $g_n$ is a function of $\mathcal{X}^{n} \times \mathcal{Y}^{n}$, and $R_{X|Y}(D)$ is the smallest achievable rate at a distortion level $D$. It was shown in \cite{Wyner1976TheDecoder} that if $(X^n,Y^n)$ is jointly i.i.d., then
\begin{equation}\label{eq:rdf:wyner}
        R_{X|Y}(D) = \min_{p(\hat{x}| x, y): \mathbbm{E}[d(X, \hat{X})] \leq D} I (X; \hat{X}| Y). 
\end{equation}
Interestingly, the rate-distortion functions \eqref{eq:rdf:shannon} and \eqref{eq:rdf:wyner} satisfy \cite{Gray1973AFunctions} the inequality
\begin{equation}\label{eq:gray:cond_rate}
        R_{X|Y}(D) \leq R_{X}(D) \leq R_{X|Y}(D) + I(X; Y),
\end{equation}
which is analogous to the entropic inequality $H(X|Y) \leq H(X) = H(X|Y) + I(X; Y)$.

In general, sources of interest need not have a sequential structure; for example, graphical sources. In such cases, we can still study the mutual information between a source and its reconstructed version, minimized over all conditional distributions satisfying the distortion condition, as in \eqref{eq:rdf:shannon}. It can be argued that this quantity gives a converse bound on the compression rate (appropriately scaled) in a lossy compression problem. 

\section{Rate Distortion Function of the SBM}


We focus on the Hamming distortion measure on the set of all graphs with $n$ vertices. For two graphs $G_n$ and $\hat{G}_n$, the Hamming distortion is defined as $d(G_n, \hat{G}_n)= \sum_{i<j}\mathbbm{1}(E_{i,j} \neq \hat{E}_{i,j}),$
where the edges $E_{i,j}$ and $\hat{E}_{i,j}$, respectively, correspond to the graphs  $G_n$ and $\hat{G}_n$, and $\mathbbm{1}(.)$ is the indicator function. Clearly, $0\leq d(G_n, \hat{G}_n) \leq \binom{n}{2}$. For a stochastic block model $(\mathbf{X}, G_n)\sim \text{SBM}(n,\mathbf{p},\mathbf{W})$, the rate-distortion function at the Hamming distortion $D$ is defined as
\begin{align}\label{eq:sbm:no:conditioning}
    R_{G_n}(D) \triangleq \min_{P_{\hat{G}_n|G_n}: \mathbb{E}[d(G_n,\hat{G}_n)] \leq D } I(G_n;\hat{G}_n),
\end{align}
where $\hat{G}_n$ represents the reconstructed version of the original graph $G_n$. Notice that if $\binom{n}{2} \min\{\sum_{l,m} p_l p_m  w_{l,m}, \linebreak \sum_{l,m} p_l p_m \left(1-w_{l,m}\right)\} \leq D$, $R_{G_n}(D)=0$ because $\hat{G}_n$ can be made independent of $G_n$ while satisfying the constraint\footnote{We use the inequality $\mathbb{E}[d(G_n,\hat{G}_n)] 
= \sum_{i<j} \mathbb{P}(E_{i, j}\oplus \hat{E}_{i, j}=1)\geq \sum_{i<j} \min\{\mathbb{P}(E_{ij}=0), \mathbb{P}({E}_{ij}=1)\}$, which holds when $G_n$ and $\hat{G}_n$ are independent.} $\mathbb{E}[d(G_n,\hat{G}_n)] \leq D$. So in this case, we restrict to the case $0\leq D\leq \binom{n}{2} \min\{\sum_{l,m} p_l p_m  w_{l,m}, \sum_{l,m} p_l p_m \left(1-w_{l,m}\right)\}$.

In the case when  $\mathbf{X}$ is available as side information to both the encoder and decoder, the rate-distortion function (with side information $\mathbf{X}$) at the Hamming distortion $D$ is defined as
\begin{align}
    R_{G_n|\mathbf{X}}(D) \triangleq \min_{P_{\hat{G}_n|G_n, \mathbf{X}}: \mathbb{E}[d(G_n,\hat{G}_n)] \leq D } I(G_n;\hat{G}_n|  \mathbf{X}).
\end{align}
As in previous case, if $\binom{n}{2} \sum_{l,m} p_l p_m  \min\{ w_{l,m}, 1-w_{l,m} \} \leq D$, then\footnote{When $G_n$ is independent of $\hat{G}_n$ conditioned on $\mathbf{X}$, it holds that $\mathbb{E}[d(G_n,\hat{G}_n)] 
= \sum_{i<j} \mathbb{P}(E_{i, j}\oplus \hat{E}_{i, j}=1)\geq \sum_{i<j} \sum_{l,m} p_l p_m \min\{\mathbb{P}(E_{ij}=0|X_i = l, X_j = m), \mathbb{P}({E}_{ij}=1|X_i = l, X_j = m)\}$.} $R_{G_n|\mathbf{X}}(D)=0$. Therefore, the interval that is of interest here is $0\leq D\leq \binom{n}{2} \sum_{l,m} p_l p_m  \min\{ w_{l,m}, 1-w_{l,m} \}$.

In the next theorem, we give a characterization of the rate-distortion function of the stochastic block model.

\begin{Theorem}[RDF of SBM]\label{RDFSBMcommunityknown}
Let $(\mathbf{X}, G_n)\sim \text{SBM}(n,\mathbf{p},\mathbf{W})$.
\begin{enumerate} 
    \item  For $0\leq D\leq \binom{n}{2} \sum_{l,m} p_l p_m  \min\{ w_{l,m}, 1-w_{l,m} \}$, the rate-distortion function of the SBM model at the Hamming distortion $D$ is 
    \begin{equation}\label{eq:rdf_sbm:uncond}
        R_{G_n}(D) = \binom{n}{2} \mathbf{p}^T \left[ h_2(\mathbf{W}) - h_2(\mathbf{D}^*)\right] \mathbf{p} + \mathcal{O}(n),
    \end{equation}
    where $\mathbf{D}^*$ is a matrix with entries chosen according to ${d}^*_{l,m} = \min\left\{ \min\left\{ w_{l,m}, 1- w_{l,m}\right\}, \mu  \right\},$  for ${l,m \in [k]},$
and $\mu$ is chosen such that the constraint $ \mathbf{p}^T\mathbf{D}^* \mathbf{p} = \frac{D}{\binom{n}{2}}$.
    The rate-distortion function of the SBM graph conditioned on the community labels is given by 
    \begin{equation}\label{eq:rdf_sbm:condtional}
        R_{G_n|\mathbf{X}}(D) = \binom{n}{2} \mathbf{p}^T \left[ h_2(\mathbf{W}) - h_2(\mathbf{D}^*)\right] \mathbf{p}
    \end{equation}
    with $\mathbf{D}^*$ defined above.
    \item For $\binom{n}{2} \sum_{l,m} p_l p_m  \min\{ w_{l,m}, 1-w_{l,m} \} \leq D \leq \min\{\sum_{l,m} p_l p_m  w_{l,m}, \sum_{l,m} p_l p_m \left(1-w_{l,m}\right)\}$,  $R_{G_n|\mathbf{X}}(D) =0$ and  $R_{G_n}(D) =\mathcal{O}(n)$
    \end{enumerate}
\end{Theorem}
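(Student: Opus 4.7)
The plan is to first establish the conditional result \eqref{eq:rdf_sbm:condtional} and then deduce \eqref{eq:rdf_sbm:uncond} from it via the Gray-type inequality \eqref{eq:gray:cond_rate}. The driving observation is that, given $\mathbf{X}$, the graph $G_n$ decomposes into $\binom{n}{2}$ conditionally independent Bernoulli edges with parameters $\{w_{X_i,X_j}\}$, so the conditional problem reduces to lossy compression of a product of binary sources under per-coordinate Hamming distortion.

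For the converse in the conditional case, I would start from $I(G_n;\hat{G}_n \mid \mathbf{X}) = H(G_n \mid \mathbf{X}) - H(G_n \mid \hat{G}_n, \mathbf{X})$, use Lemma \ref{EntropySBM} to evaluate $H(G_n \mid \mathbf{X}) = \binom{n}{2}\mathbf{p}^T h_2(\mathbf{W})\mathbf{p}$, and apply the chain rule together with the fact that conditioning reduces entropy to get
\begin{equation*}
H(G_n \mid \hat{G}_n, \mathbf{X}) \leq \sum_{i<j} H(E_{i,j} \mid \hat{E}_{i,j}, X_i, X_j).
\end{equation*}
Setting $\delta_{l,m}^{(i,j)} = \mathbb{P}(E_{i,j} \neq \hat{E}_{i,j} \mid X_i=l, X_j=m)$ and observing that $E_{i,j}$ is a function of $\hat{E}_{i,j}$ and the error bit $E_{i,j} \oplus \hat{E}_{i,j}$, each inner entropy is at most $h_2(\delta_{l,m}^{(i,j)})$, yielding
\begin{equation*}
I(G_n;\hat{G}_n \mid \mathbf{X}) \geq \sum_{i<j}\sum_{l,m} p_l p_m\left[ h_2(w_{l,m}) - h_2(\delta_{l,m}^{(i,j)})\right].
\end{equation*}
By concavity of $h_2$ and Jensen applied for each fixed $(l,m)$ over the $\binom{n}{2}$ pairs $(i,j)$, replacing $\delta_{l,m}^{(i,j)}$ by its average $\bar{\delta}_{l,m}$ only tightens the bound, while the distortion constraint becomes $\sum_{l,m} p_l p_m \bar{\delta}_{l,m} \leq D/\binom{n}{2}$. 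The remaining finite-dimensional convex program is then handled by a Lagrangian/KKT argument on the concave objective $\sum_{l,m} p_l p_m h_2(\bar{\delta}_{l,m})$ with box constraints $\bar{\delta}_{l,m} \leq \min\{w_{l,m}, 1-w_{l,m}\}$; the KKT stationarity condition gives $\log\!\big((1-\bar\delta_{l,m})/\bar\delta_{l,m}\big) = \mu$ until the box is active, which is exactly the reverse water-filling $\bar{\delta}_{l,m}^{*} = \min\{\min(w_{l,m},1-w_{l,m}),\mu\}$ announced in the theorem.

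For achievability, the plan is to construct $\hat{G}_n$ by passing each $E_{i,j}$, conditionally on $(X_i,X_j)=(l,m)$, through an independent binary test channel that realises $\mathbb{P}(\hat E_{i,j}\neq E_{i,j}\mid X_i=l,X_j=m) = \delta^*_{l,m}$; a direct computation then shows that the converse inequalities hold with equality, so the minimum in \eqref{eq:rdf_sbm:condtional} is attained. For \eqref{eq:rdf_sbm:uncond}, I would invoke the graphical analogue of \eqref{eq:gray:cond_rate}, whose proof is a short mutual-information identity relying only on the joint law of $(G_n,\mathbf{X},\hat{G}_n)$ and not on any sequential i.i.d.\ structure. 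Since $I(G_n;\mathbf{X}) = H(G_n) - H(G_n\mid\mathbf{X}) = \mathcal{O}(n)$ by Lemma \ref{EntropySBM} (or just $I(G_n;\mathbf{X}) \leq H(\mathbf{X}) \leq n\log k$), the additive slack is absorbed into the $\mathcal{O}(n)$ term in \eqref{eq:rdf_sbm:uncond}.

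Part (2) follows with minimal extra work: choosing $\hat{E}_{i,j} = \mathbbm{1}\{w_{X_i, X_j} \geq 1/2\}$ makes $\hat{G}_n$ a deterministic function of $\mathbf{X}$, so $I(G_n;\hat{G}_n\mid\mathbf{X})=0$ while $\mathbb{E}[d(G_n,\hat G_n)] = \binom{n}{2}\sum_{l,m} p_l p_m \min\{w_{l,m}, 1-w_{l,m}\} \leq D$, and the unconditional $\mathcal{O}(n)$ bound again drops out of \eqref{eq:gray:cond_rate}. The main technical obstacle I anticipate is the water-filling step: one must carefully track when the unconstrained optimum $\bar{\delta}_{l,m} = \mu$ exceeds the Bernoulli rate-distortion boundary $\min\{w_{l,m},1-w_{l,m}\}$, since entries that saturate at this boundary contribute zero rate but keep consuming distortion budget; this is exactly what pins down the active regime $0 \leq D \leq \binom{n}{2}\sum_{l,m} p_l p_m \min\{w_{l,m},1-w_{l,m}\}$ in part (1) and the transition to the degenerate regime in part (2).
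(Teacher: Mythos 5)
Your overall route is the same as the paper's: condition on $\mathbf{X}$ to reduce to $\binom{n}{2}$ conditionally independent Bernoulli edges, single-letterize via the chain rule and ``conditioning reduces entropy,'' average the per-edge crossover probabilities with Jensen's inequality, solve the resulting convex program by KKT/reverse water-filling, achieve the bound with a per-edge binary symmetric test channel, and transfer to the unconditional case through \eqref{eq:gray:cond_rate} with $I(G_n;\mathbf{X})\leq H(\mathbf{X})=\mathcal{O}(n)$. Your treatment of part (2) via $\hat{E}_{i,j}=\mathbbm{1}\{w_{X_i,X_j}\geq 1/2\}$ is fine and slightly more explicit than the paper's.

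There is, however, one step in your converse that does not go through as written: you impose the box constraints $\bar{\delta}_{l,m}\leq\min\{w_{l,m},1-w_{l,m}\}$ in the optimization, but these are not constraints satisfied by feasible test channels --- a conditional distribution $P_{\hat{G}_n|G_n,\mathbf{X}}$ can perfectly well have $\mathbb{P}(E_{i,j}\neq\hat{E}_{i,j}\mid X_i=l,X_j=m)$ exceed $\min\{w_{l,m},1-w_{l,m}\}$. Adding constraints to a minimization only raises its value, so the box-constrained minimum is \emph{not} a valid lower bound on $I(G_n;\hat{G}_n\mid\mathbf{X})$ given only the chain of inequalities you wrote; and without the boxes, the water-filling solution puts $\bar{\delta}_{l,m}=\mu$ everywhere and undershoots the claimed rate whenever some $\min\{w_{l,m},1-w_{l,m}\}<\mu$. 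You correctly flag this saturation phenomenon as the main obstacle, but the resolution requires an extra idea you do not supply: before optimizing, lower-bound each per-edge term by
\begin{equation*}
\max\bigl\{H(E_{i,j}\mid X_i=l,X_j=m)-H(E_{i,j}\oplus\hat{E}_{i,j}\mid X_i=l,X_j=m),\,0\bigr\},
\end{equation*}
which is legitimate because the per-edge conditional mutual information is nonnegative (this is \eqref{eq:cnd_reduction:1} in the paper). This replaces $h_2(\bar{\delta}_{l,m})$ in the objective by $\min\{h_2(\bar{\delta}_{l,m}),h_2(w_{l,m})\}$, after which pushing $\bar{\delta}_{l,m}$ past $\min\{w_{l,m},1-w_{l,m}\}$ gains nothing while consuming distortion budget, so the restriction to the box is without loss of generality and the KKT analysis you describe applies. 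Two smaller points: defining $\delta^{(i,j)}_{l,m}$ as the raw error probability rather than $\min\{\mathbb{P}(E_{i,j}\oplus\hat{E}_{i,j}=1\mid\cdot),\mathbb{P}(E_{i,j}\oplus\hat{E}_{i,j}=0\mid\cdot)\}$ forfeits the guarantee $\delta\leq 1/2$ that makes $h_2$ monotone on the feasible range, and the Jensen step \emph{weakens} (not ``tightens'') the lower bound --- it remains valid, but the direction matters for the logic of a converse.
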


\begin{proof}
    See Section~\ref{sec:proof:sbm}.
\end{proof}

The result in~\eqref{eq:rdf_sbm:condtional} is shown in Figure \ref{fig:RDF_Curve}. 

\begin{Remark}
    The value of $\mu$ in the above solution can be chosen by ``reverse water-filling," but now the levels $d_{l,m}^*$ scaled by $p_lp_m$ should add up to $\frac{D}{\binom{n}{2}}$.
\end{Remark}

\begin{figure}[h]
    \centering
    \includegraphics[scale = 0.35]{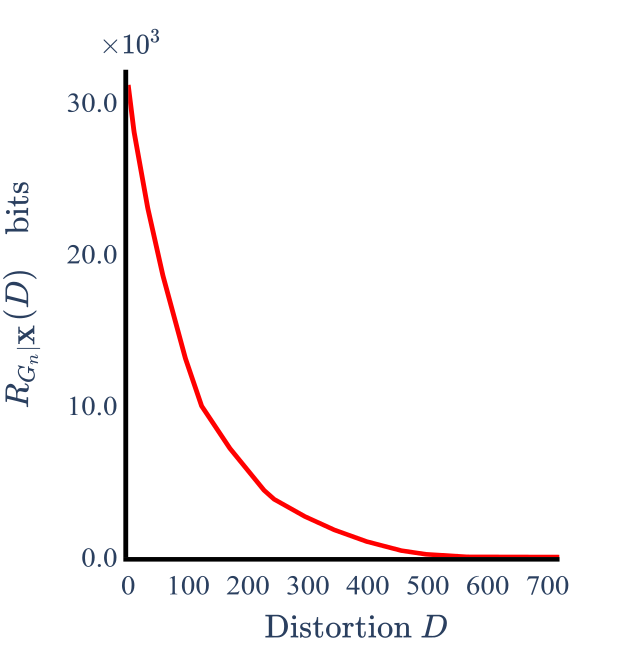}
    \caption{Rate-distortion function \eqref{eq:rdf_sbm:condtional} of an SBM over 100 nodes with $k= 3$, $\mathbf{p}=(0.4, 0.3, 0.3)$ and $\mathbf{W} =  [0.5, 0.2, 0.1; 0.2, 0.5, 0.1; 0.1, 0.1, 0.4]$.}
    \label{fig:RDF_Curve}
\end{figure}

A similar characterization can also be obtained for the rate-distortion function of the (inhomogeneous) ER model, which is defined as in \eqref{eq:sbm:no:conditioning}.


\begin{Theorem}[RDF of the inhomogeneous ER Model]\label{RDFERbinom}
Given an inhomogeneous ER graph with an edge probability, $p_{i, j}$ between any two nodes $i$ and $j$, $i,j \in \mathcal{V}$ and Hamming distortion $0\leq D\leq \sum_{i<j}\min\{p_{i,j}, 1-p_{i,j}\}$, the rate-distortion function of the graph is given by 

\begin{equation}
    R(D) = \sum_{i < j} \left(h_2(p_{i, j}) - h_2(d_{i, j})\right)
\end{equation}
where $d_{i, j} = \min \{ \min\{p_{i, j}, 1- p_{i, j}\}, \lambda \}$ for ${i,j \in \mathcal{V}}$
and $\lambda$ is chosen such that $\sum_{i<j} d_{i, j} = D$. 
\end{Theorem}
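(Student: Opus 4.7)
The plan is to treat the inhomogeneous Erd\H{o}s-R\'enyi graph as a vector of $\binom{n}{2}$ independent Bernoulli random variables $\{E_{i,j}\}_{i<j}$ with parameters $\{p_{i,j}\}$, and then adapt the classical parallel-source rate-distortion argument (reverse water-filling), with the binary entropy $h_2$ playing the role usually reserved for the Gaussian log-variance. Both the converse (a mutual-information lower bound) and the achievability (a matching product test channel) are handled coordinate by coordinate, and the water-filling constraint pins down the Lagrange multiplier $\lambda$.

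For the converse, I would fix any joint law $P_{G_n,\hat{G}_n}$ with $\mathbb{E}[d(G_n,\hat{G}_n)] \leq D$ and set $D_{i,j} := \mathbb{P}(E_{i,j} \neq \hat{E}_{i,j})$, so the distortion constraint reads $\sum_{i<j} D_{i,j} \leq D$. Since the $E_{i,j}$ are independent (Lemma~\ref{EntropyERGeneral}),
\begin{equation*}
I(G_n;\hat{G}_n) = \sum_{i<j} h_2(p_{i,j}) - H(G_n \mid \hat{G}_n).
\end{equation*}
Conditioning reduces entropy, so $H(G_n \mid \hat{G}_n) \leq \sum_{i<j} H(E_{i,j} \mid \hat{E}_{i,j})$, and a one-bit Fano-type bound (or equivalently $H(E_{i,j}\mid\hat{E}_{i,j}) \leq H(E_{i,j}\oplus\hat{E}_{i,j}) = h_2(D_{i,j})$) yields
\begin{equation*}
I(G_n;\hat{G}_n) \geq \sum_{i<j} \bigl(h_2(p_{i,j}) - h_2(D_{i,j})\bigr).
\end{equation*}
I would then minimize this over $D_{i,j} \in [0,\min\{p_{i,j},1-p_{i,j}\}]$ subject to $\sum_{i<j} D_{i,j} \leq D$ (the upper end of the box may be imposed for free, because $h_2$ is symmetric about $1/2$ so extending $D_{i,j}$ past $\min\{p_{i,j},1-p_{i,j}\}$ never decreases the objective). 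This is a convex program; a KKT analysis with one multiplier $\lambda$ for the budget and multipliers for the box constraints gives the stationarity condition $h_2'(D_{i,j}) = \text{const}$ in the interior, producing the water-filling solution $d_{i,j} = \min\{\min\{p_{i,j},1-p_{i,j}\},\lambda\}$ with $\lambda$ chosen so that $\sum_{i<j} d_{i,j} = D$.

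For achievability, I would build a product test channel edge by edge: for each $(i,j)$ independently, pick $(E_{i,j},\hat{E}_{i,j})$ as the input-output pair of a reverse binary symmetric channel of crossover $d_{i,j}$. Concretely, put $\hat{E}_{i,j} \sim \text{Bernoulli}(q_{i,j})$ and set $E_{i,j} = \hat{E}_{i,j} \oplus Z_{i,j}$ with $Z_{i,j} \sim \text{Bernoulli}(d_{i,j})$ independent, choosing $q_{i,j} = (p_{i,j}-d_{i,j})/(1-2 d_{i,j})$ so the induced marginal on $E_{i,j}$ matches $p_{i,j}$. Then $\mathbb{P}(E_{i,j} \neq \hat{E}_{i,j}) = d_{i,j}$, the budget is saturated, and by independence $I(G_n;\hat{G}_n) = \sum_{i<j} I(E_{i,j};\hat{E}_{i,j}) = \sum_{i<j}(h_2(p_{i,j}) - h_2(d_{i,j}))$, matching the converse.

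The main obstacle will be the boundary case of the saturated coordinates where $d_{i,j} = \min\{p_{i,j},1-p_{i,j}\}$: there $q_{i,j}$ degenerates and one must instead set $\hat{E}_{i,j}$ to the deterministic majority value, then verify that this still yields the correct marginal on $E_{i,j}$, zero contribution to the mutual information (so the coordinate simply drops out of the rate sum, consistent with $h_2(d_{i,j}) = h_2(p_{i,j})$), and precisely the distortion $\min\{p_{i,j},1-p_{i,j}\}$. Once this corner is handled, verifying the KKT conditions that pin down $\lambda$ and checking the endpoint $D = \sum_{i<j}\min\{p_{i,j},1-p_{i,j}\}$ (at which $R(D) = 0$) are routine.
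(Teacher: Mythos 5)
Your overall architecture---decomposing the graph into $\binom{n}{2}$ independent Bernoulli edge variables, single-letterizing the converse edge by edge, a KKT/reverse-water-filling analysis, and a product reverse-BSC test channel for achievability---is the same as the paper's, and your achievability half (including the degenerate corner) is fine. The genuine gap is in the converse, at the step where you claim the restriction $D_{i,j}\in[0,\min\{p_{i,j},1-p_{i,j}\}]$ can be ``imposed for free.'' The justification is backwards: pushing $D_{i,j}$ past $\min\{p_{i,j},1-p_{i,j}\}$ towards $1/2$ \emph{increases} $h_2(D_{i,j})$ and therefore \emph{decreases} your lower bound $\sum_{i<j}\left(h_2(p_{i,j})-h_2(D_{i,j})\right)$, whose summands can go negative. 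Symmetry of $h_2$ about $1/2$ only lets you fold $D_{i,j}>1/2$ down to $1-D_{i,j}$; it does not get you down to $\min\{p_{i,j},1-p_{i,j}\}$. Since a valid coupling can induce an individual $D_{i,j}$ exceeding $\min\{p_{i,j},1-p_{i,j}\}$ while respecting the total budget, a correct converse from your chain of inequalities must minimize over the full box $[0,1]$, and there the bound is strictly weaker than $R(D)$. Concretely, take two edges with $p_{1}=0.01$, $p_{2}=0.4$ and $D=0.2$: the theorem gives $d^{*}_{1}=0.01$, $d^{*}_{2}=0.19$ and $R(D)=h_2(0.4)-h_2(0.19)\approx 0.27$, whereas minimizing $\sum_{i}\left(h_2(p_{i})-h_2(D_{i})\right)$ over $D_1+D_2\le 0.2$ with $D_i\in[0,1]$ is attained at $D_1=D_2=0.1$ and yields only about $0.11$. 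So the converse does not meet the achievability.

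The repair is exactly the extra step the paper inserts before optimizing: use nonnegativity of each per-edge mutual information to write $I(E_{i,j};\hat{E}_{i,j})\ge\max\{0,\,h_2(p_{i,j})-h_2(d_{i,j})\}$ with $d_{i,j}:=\min\{\mathbb{P}(E_{i,j}\oplus\hat{E}_{i,j}=1),\,\mathbb{P}(E_{i,j}\oplus\hat{E}_{i,j}=0)\}\le 1/2$, i.e.\ the summand $h_2(p_{i,j})-\min\{h_2(p_{i,j}),h_2(d_{i,j})\}$. Because $h_2$ is increasing on $[0,1/2]$, replacing $d_{i,j}$ by $\min\{d_{i,j},\min\{p_{i,j},1-p_{i,j}\}\}$ leaves each such summand unchanged and only loosens the budget constraint, and only after this folding is the problem genuinely equivalent to the box-constrained convex program you wrote down, whose value matches your test channel. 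With that step added, your KKT analysis and achievability argument go through as in the paper.
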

\begin{proof}
    See Section~\ref{sec:proof:ER}.
\end{proof}

From the above theorem, we can easily see that the rate-distortion function of a Erd\"os-R\'enyi(ER) graph model $\mathcal{G}(n,p)$ with the Hamming distortion $D$ is given by
\begin{equation}
    R(D) = \begin{cases} 
            \binom{n}{2}\left[h_2(p) - h_2\Big(\frac{D}{\binom{n}{2}}\Big)\right], & \text{if} \quad 0 \leq D \leq  \binom{n}{2} \min\{p,1-p\} \\
            0, & \text{otherwise}.
            \end{cases} \label{RDFERG_np}
    \end{equation}


\section{Proofs of Theorem~\ref{RDFSBMcommunityknown} and Theorem~\ref{RDFERbinom}}

\subsection{Proof of Theorem~\ref{RDFSBMcommunityknown}}\label{sec:proof:sbm}


By applying \eqref{eq:gray:cond_rate} to the pair $( \mathbf{X}, G_n)$, we obtain
\begin{align}
    R_{G_n|\mathbf{X}}(D) \leq R_{G}(D) \leq R_{G_n|\mathbf{X}}(D) + I(G;\mathbf{X}).
\end{align}
Since $I(G;\mathbf{X})\leq H(\mathbf{X}) = n H(\mathbf{p})$, we have $R_{G_n|\mathbf{X}}(D) \leq R_{G}(D) \leq R_{G_n|\mathbf{X}}(D) + \mathcal{O}(n).$ The second part of the theorem immediately follows because  $R_{G_n|\mathbf{X}}(D) =0$ when $\binom{n}{2} \sum_{l,m} p_l p_m  \min\{ w_{l,m}, 1-w_{l,m} \} \leq D \leq \min\{\sum_{l,m} p_l p_m  w_{l,m}, \linebreak \sum_{l,m} p_l p_m \left(1-w_{l,m}\right)\}$. For the first part, it is enough to show that $R_{G_n|\mathbf{X}}(D)$ satisfies \eqref{eq:rdf_sbm:condtional}. 

To this end, consider a conditional joint distribution $P_{\hat{G}_n\mid G_n, \mathbf{X}}$ satisfying the constraint $\mathbb{E}[d(G_n,\hat{G}_n)] \leq D$.  For this distribution, we have 
 \begin{align}
    I(G_n;\hat{G}_n|  \mathbf{X}) &= H(G_n|  \mathbf{X}) -H(G_n| \hat{G}_n,  \mathbf{X})\nonumber\\
    & = \sum_{i<j} H(E_{i, j}| X_i, X_j) -H(G_n| \hat{G}_n,  \mathbf{X}) \label{eq:eij_cond_xi_xj} \\
    & = \sum_{i<j} H(E_{i, j}| X_i, X_j) -  \sum_{i<j} H(E_{i, j}| E_{12}, \dotsc, E_{i-1,j}, \hat{G}_n, \mathbf{X}) \nonumber \\
    & \geq \sum_{i<j} H(E_{i, j}| X_i, X_j) -  \sum_{i<j} H(E_{i, j}| \hat{E}_{i, j}, X_i, X_j) \nonumber \\
    & = \sum_{i<j}  \sum_{l,m} p_l p_m \left[ H(E_{i, j}|X_i = l, X_j = m)  - H(E_{i, j}| \hat{E}_{i, j}, X_i = l, X_j = m) \right]  \nonumber\\
     & = \sum_{i<j}  \sum_{l,m} p_l p_m \left[H(E_{i, j}|X_i = l, X_j = m)  - H(E_{i, j} \oplus \hat{E}_{i, j}| \hat{E}_{i, j}, X_i = l, X_j = m) \right]  \nonumber\\
     & \geq \sum_{i<j}  \sum_{l,m} p_l p_m \max\left\{H(E_{i, j}|X_i = l, X_j = m) - H(E_{i, j} \oplus \hat{E}_{i, j}| X_i = l, X_j = m), \ 0 \right\}\label{eq:cnd_reduction:1}\\
     & = \sum_{i<j}  \sum_{l,m} p_l p_m \max\left\{h_2(w_{l,m}) - h_2(d_{l,m}^{i,j}), \ 0 \right\}\label{eq:define:D_ijs}\\
     & = \sum_{i<j}  \sum_{l,m} p_l p_m \left[h_2(w_{l,m}) - \min\{ h_2(d_{l,m}^{i,j}), h_2(w_{l,m}) \} \right]\nonumber\\
     & = \binom{n}{2} \mathbf{p}^Th_2(\mathbf{W})\mathbf{p} - \sum_{i<j}  \sum_{l,m} p_l p_m \min\{ h_2(d_{l,m}^{i,j}), h_2(w_{l,m}) \}\nonumber\\
      & \geq \binom{n}{2} \mathbf{p}^Th_2(\mathbf{W})\mathbf{p} -   \sum_{l,m} p_l p_m \min\left\{ \sum_{i<j} h_2(d_{l,m}^{i,j}), \binom{n}{2}h_2(w_{l,m}) \right\} \label{Inequality}\\
      & \geq \binom{n}{2} \mathbf{p}^Th_2(\mathbf{W})\mathbf{p} -  \binom{n}{2} \sum_{l,m} p_l p_m \min\left\{ h_2(d_{l,m} ), h_2(w_{l,m}) \right\}.\label{AverageEntropy}
\end{align}
In \eqref{eq:eij_cond_xi_xj}, we use the fact that $P_{\{E_{i,j}:i<j\}\mid \mathbf{X}}=\prod_{(i,j)}P_{E_{i, j}\mid X_i, X_j}$. \eqref{eq:cnd_reduction:1} is from the fact that mutual information is nonnegative. In \eqref{eq:define:D_ijs}, we define $$d_{l,m}^{i,j} \triangleq \min\left\{ \mathbb{P}(E_{i, j} \oplus \hat{E}_{i, j} = 1| X_i = l, X_j = m),\  \mathbb{P}(E_{i, j} \oplus \hat{E}_{i, j} = 0| X_i = l, X_j = m)\right\}$$ for  $l, m \in [k]$ \text{ and } $i,j \in \mathcal{V}$. For \eqref{Inequality}, we rely on a simple fact that for two real number sequences $(a_1, a_2, ..., a_n)$ and  $(b_1, b_2, ..., b_n)$, $ \sum_i \min\left\{ a_i,  b_i\right\} \leq \min \left\{\sum_i a_i, \sum_i b_i\right\}$. The inequality  \eqref{AverageEntropy}, where we set  $d_{l,m}\triangleq \sum_{i<j} \frac{d_{l,m}^{i,j}}{\binom{n}{2}}$, follows immediately from Jensen's inequality and concavity of the entropy : 
$$  \sum_{i<j} h_2(d_{l,m}^{i,j}) =\binom{n}{2} \sum_{i<j} \frac{1}{\binom{n}{2}}h_2(d_{l,m}^{i,j})   \leq 
        \binom{n}{2} h_2\left(\sum_{i<j} \frac{d_{l,m}^{i,j}}{\binom{n}{2}} \right) =  \binom{n}{2} h_2(d_{l,m} ).$$
We can show that $d_{l,m}$'s satisfy the condition that $\sum_{l,m} d_{l,m} p_l p_m \leq \frac{D}{\binom{n}{2}}$ by using the distortion criterion: 
   \begin{align}
       D \geq \mathbb{E}[d(G_n,\hat{G}_n)] = \sum_{i<j} \mathbb{P}(E_{i, j}\oplus \hat{E}_{i, j}=1)&=\sum_{i<j} \sum_{l,m} p_l p_m \mathbb{P}(E_{i, j}\oplus \hat{E}_{i, j}=1|X_i = l, X_j = m)   \nonumber\\
       &\geq \sum_{i<j} \sum_{l,m} p_l p_m d_{l,m}^{i,j}  \nonumber\\ 
       &=  \binom{n}{2} \sum_{l,m} p_l p_m  d_{l,m}  \nonumber
   \end{align}
From \eqref{AverageEntropy}, we have 
\begin{align}
    I(G_n; \hat{G}_n |  \mathbf{X}) \geq \min\left\{\binom{n}{2} \mathbf{p}^Th_2(\mathbf{W})\mathbf{p} -   \binom{n}{2} \sum_{l,m} p_l p_m \min\left\{ h_2(d_{l,m} ), h_2(w_{l,m}) \right\}:\sum_{l,m} d_{l,m} p_l p_m \leq \frac{D}{\binom{n}{2}}\right\},\nonumber 
\end{align}
which implies that the rate-distortion function of the SBM condition on community labels should be 
\begin{align}
    R_{G_n|\mathbf{X}}(D) &\geq \min\left\{\binom{n}{2} \mathbf{p}^Th_2(\mathbf{W})\mathbf{p} -   \binom{n}{2} \sum_{l,m} p_l p_m \min\left\{ h_2(d_{l,m} ), h_2(w_{l,m}) \right\}:\sum_{l,m} p_l p_m d_{l,m}  \leq \frac{D}{\binom{n}{2}}\right\}\nonumber \\
    &=\binom{n}{2} \mathbf{p}^Th_2(\mathbf{W})\mathbf{p} -   \binom{n}{2}\max\left\{  \sum_{l,m} p_l p_m \min\left\{ h_2(d_{l,m} ), h_2(w_{l,m}) \right\}:\sum_{l,m} p_l p_m d_{l,m}  \leq \frac{D}{\binom{n}{2}}\right\}.\label{RDF_SBM_UpperBound1}
\end{align}

Notice that the optimization problem in \eqref{RDF_SBM_UpperBound1} is equivalent to the following optimization problem:
\begin{equation}
\begin{aligned}
& \text{minimize}
& & -\sum_{l,m} p_l p_m  h_2(d_{l,m} ) & \\
& \text{subject to}
& & \sum_{l,m}  p_l p_m d_{l,m} \leq  \frac{D}{\binom{n}{2}}, & {l,m \in [k]}, \\
&&& d_{l,m} \leq \min \{w_{l,m}, 1 - w_{l,m}\}, & l,m \in [k].\label{SBMOptProblemStandard}
\end{aligned}
\end{equation}
This is simply because we can map an optimizer of \eqref{RDF_SBM_UpperBound1} to a point in the constraint of \eqref{SBMOptProblemStandard} such that the objective functions of both these problems evaluated at these points remain the same, and vice versa.
The Lagrangian for the above optimization problem is \begin{equation}
    L = -\sum_{l,m} p_l p_m  h_2(d_{l,m} ) + \sum_{l,m} \lambda_{l,m} \left(d_{l,m} - \min \{w_{l,m}, 1 - w_{l,m}\}\right) +  \nu \left( \sum_{l,m}  p_l p_m d_{l,m} - \frac{D}{\binom{n}{2}}\right). \label{LangrangianSBM}
\end{equation} 
The corresponding KKT conditions are given as follows
 \begin{align}
 &\left.\frac{\partial L}{\partial d_{l,m}}\right|_{d_{l,m}= d^{*}_{l,m}} = - p_l p_m \log\left(\frac{1 - d^{*}_{l,m}}{d^{*}_{l,m}}\right) + \lambda_{l,m} + \nu p_l p_m= 0 , \quad l,m \in [k]\label{DerivativeZeroSBM},\\
       &d^{*}_{l,m} - \min \{w_{l,m}, 1 - w_{l,m}\} \leq 0, \quad l,m \in [k]\label{eq:dlm:lessthan}, \\   
       &\sum_{l,m}  p_l p_m d^{*}_{l,m} - \frac{D}{\binom{n}{2}} \leq 0,  \label{ConstraintSBM},\\
       &\nu \left( \sum_{l,m}  p_l p_m d_{l,m} - \frac{D}{\binom{n}{2}}\right)=0\label{eq:sum:leq},\\
&\lambda_{l,m} (d^{*}_{l,m} - \min \{w_{l,m}, 1 - w_{l,m}\}) = 0, \quad l,m \in [k] \label{ComplementarySlacknessSBM},\\
       &\nu\geq0, \quad \lambda_{l,m} \geq 0, \quad l,m \in [k]  \label{LambdaSBM}.
         \end{align}
         
Since \eqref{SBMOptProblemStandard} is a convex optimization problem function satisfying Slater's condition, the KKT conditions give sufficient and necessary conditions for an optimizer, that is,  $\nu$, $\lambda_{l,m}$ and $d^{*}_{l,m}$ for $l,m \in [k]$ satisfy the KKT conditions if and only if $(d^{*}_{l,m})_{l,m \in [k]}$ is an optimizer of \eqref{SBMOptProblemStandard}. To find a solution to the KKT conditions, we can consider the equation \eqref{DerivativeZeroSBM}, which can be written as 
\begin{equation}
   d^{*}_{l,m} = \frac{1}{\exp{\left(\frac{\lambda_{l,m}}{p_lp_m}+ \nu\right)} + 1}. \label{D_lmSBM}
\end{equation}
for $l,m \in [k]$.
Let us consider two cases for each $l,m \in [k]$.
\begin{itemize}
    \item \textbf{Case 1}: When $\lambda_{l,m} = 0$, we have $d^{*}_{l,m} = \frac{1}{e^{ \nu} + 1}$, which is upper bounded by $\min \{w_{l,m}, 1 - w_{l,m}\}$ as a result of \eqref{eq:dlm:lessthan}.
    \item \textbf{Case 2}: When $\lambda_{l,m} > 0$, we have $d^{*}_{l,m} = \min \{w_{l,m}, 1 - w_{l,m}\} $ by \eqref{ComplementarySlacknessSBM}. This together with \eqref{D_lmSBM} yields $ d^{*}_{l,m} = \min \{w_{l,m}, 1 - w_{l,m}\}= \frac{1}{\exp{\left(\frac{\lambda_{l,m}}{p_lp_m}+ \nu\right)} + 1}$, which is upper bounded by $\frac{1}{e^{ \nu} + 1}$ for $\lambda_{l,m} > 0$.
\end{itemize}
 Therefore, by combining these two cases and setting $\mu\triangleq \frac{1}{e^{ \nu} + 1}$, we see that the optimizer of \eqref{SBMOptProblemStandard} is of the form \begin{equation*}
        d_{l,m}^* = \min\left\{\min\{w_{l,m}, 1 - w_{l,m}\}, \mu \right\} \end{equation*} 
   for $l,m \in [k]$   
        where $\mu$ is chosen such that $\sum_{l,m}  p_l p_m d_{l,m}^* \leq  \frac{D}{\binom{n}{2}}$, and $\lambda_{l,m}\geq 0$ are chosen according to \eqref{ComplementarySlacknessSBM}. Now, let us argue that we cannot have $\sum_{l,m}  p_l p_m d_{l,m}^* <  \frac{D}{\binom{n}{2}}$. If it were the case, then $\nu=0$ by \eqref{eq:sum:leq}, implying that $\mu=\frac{1}{2}$ and $d_{l,m}^* = \min\{w_{l,m}, 1 - w_{l,m}\}$. This contradicts the fact that $D\leq \binom{n}{2} \sum_{l,m} p_l p_m  \min\{ w_{l,m}, 1-w_{l,m} \} $. Therefore, the $d^{*}_{l,m}$'s satisfy the condition $\sum_{l,m}  p_l p_m d_{l,m}^* =  \frac{D}{\binom{n}{2}}$.

        Hence, we obtain from \eqref{RDF_SBM_UpperBound1} that 
\begin{align}
    R_{G_n|\mathbf{X}}(D) \geq \binom{n}{2} \mathbf{p}^Th_2(\mathbf{W})\mathbf{p} - \binom{n}{2} \sum_{l,m} p_l p_m h_2(d_{l,m}^*) = \binom{n}{2} \mathbf{p}^T \left[ h_2(\mathbf{W}) - h_2(\mathbf{D}^*)\right]\mathbf{p}, \label{RDF_SBM_lowerBound}
\end{align}
    where $\mathbf{D}^*$ is a matrix with entries chosen according to ${d}^*_{l,m} = \min\left\{ \min\left\{ w_{l,m}, 1- w_{l,m}\right\}, \mu  \right\}, l,m \in [k],$
and $\mu$ is chosen such that the constraint $ \mathbf{p}^T\mathbf{D}^* \mathbf{p} = \frac{D}{\binom{n}{2}}$. 

{\textbf{Achievability of the lower bound:}}
Now, it remains to show that inequality in \eqref{RDF_SBM_lowerBound} holds with equality. To this end, we have to produce a conditional probability distribution $P_{\hat{G}_n|G_n,\mathbf{X}}$ (or equivalently, $P_{\{\hat{E}_{i,j}:{i<j}\}|\{E_{i,j}:{i<j}\},\mathbf{X}}$) that satisfies the distortion constraint and has $I(G_n; \hat{G}_n |  \mathbf{X})$ equal to the expression on the right-hand side of \eqref{RDF_SBM_lowerBound}. Let $d_{l,m}^*$ be chosen according to the constraint in \eqref{RDF_SBM_lowerBound}. Consider the conditional probability distribution of the form
\begin{align}
    P_{\{\hat{E}_{i,j}:{i<j}\}|\{E_{i,j}:{i<j}\},\mathbf{X}}= \prod_{i<j} P_{\hat{E}_{i, j}|E_{i, j},X_i, X_j}=\prod_{i<j} \frac{P_{{E}_{i, j}|\hat{E}_{i, j}, X_i, X_j}P_{\hat{E}_{i, j}| X_i, X_j}}{P_{{E}_{i, j}|X_i, X_j}},
\end{align}
 where for $l,m \in [k]$, $P_{{E}_{i, j}|\hat{E}_{i, j}, X_i, X_j}(\cdot \mid \cdot,l,m)$ is a binary symmetric channel with the crossover probability $d^*_{l,m}$ and $P_{\hat{E}_{i, j}| X_i, X_j}(1\mid l,m)=\frac{w_{l,m} -d_{l,m}^*}{1 - 2d_{l,m}^*}$ and $P_{\hat{E}_{i, j}| X_i, X_j}(0\mid l,m)=\frac{1-w_{l,m} -d_{l,m}^*}{1 - 2d_{l,m}^*}$, which are non-negative by virtue of the way $d^*_{l,m}$'s are defined. (Note that these distributions are the same for all $i<j$.) It is easy to see that this distribution satisfies the distortion criterion:
 \begin{align}
       \mathbb{E}[d(G_n,\hat{G}_n)] = \sum_{i<j} \mathbb{P}(E_{i, j}\oplus \hat{E}_{i, j}=1)&=\sum_{i<j} \sum_{l,m} \mathbb{P}(E_{i, j}\oplus \hat{E}_{i, j}=1|X_i = l, X_j = m) p_l p_m  \nonumber\\&= \sum_{i<j} \sum_{l,m} d^*_{l,m} p_l p_m \nonumber\\&=  \binom{n}{2} \mathbf{p}^T\mathbf{D}^* \mathbf{p} \nonumber \\&= D  \nonumber.
   \end{align}
The conditional mutual information is 
\begin{align}
    I(G_n; \hat{G}_n |  \mathbf{X}) = H(G_n|  \mathbf{X}) - H(G_n| \hat{G}_n, \mathbf{X})  &= \sum_{i<j} \sum_{l,m} p_l p_m h_2(w_{l,m})  - \sum_{i<j} \sum_{l,m} p_l p_m h_2(d_{l,m}^*)\nonumber \\
    &=  \binom{n}{2} \mathbf{p}^T h_2(\mathbf{W}) \mathbf{p} - \binom{n}{2} \mathbf{p}^T h_2(\mathbf{D}^*) \mathbf{p},    
\end{align}
which implies that 
\begin{align}
     R_{G_n|\mathbf{X}}(D) \leq \binom{n}{2} \mathbf{p}^T h_2(\mathbf{W}) \mathbf{p} - \binom{n}{2} \mathbf{p}^T h_2(\mathbf{D}^*) \mathbf{p} \label{RDF_SBM_UpperBound}.
\end{align}
By combining \eqref{RDF_SBM_UpperBound} and \eqref{RDF_SBM_lowerBound}, we get the desired result.

\subsection{ Proof of Theorem~\ref{RDFERbinom}}\label{sec:proof:ER}


For an inhomogeneous ER graph $G_n$ and its reconstruction $\hat{G}_n$, our goal is to determine the rate-distortion function,

\begin{equation}\label{eq:9}
    R(D) = \min_{P_{\hat{G}_n|G_n}: \mathbb{E}[d(G_n,\hat{G}_n)] \leq D } I(G_n;\hat{G}_n) 
\end{equation} 
where $\mathbb{E}[d(G_n,\hat{G}_n)]$ is the  expected Hamming distortion. To obtain this rate-distortion function of the ER graph, we first need to obtain a lower bound on the mutual information between $G_n$ and its reproduced version $\hat{G}_n$ and then show that this lower bound is achievable.

Consider a distribution $P_{\hat{G}_n|G_n}$ satisfying the distortion criteria. By setting $d_{i, j} \triangleq \min \{\mathbb{P}(E_{i, j} \oplus \hat{E}_{i, j}  = 1), \mathbb{P}(E_{i, j} \oplus \hat{E}_{i, j}  = 0)\}$, the distortion condition becomes  
\begin{align}
   D \geq \mathbb{E}[d(G_n,\hat{G}_n)]  =  \sum_{1 \leq i < j \leq n} \mathbb{E}[d(E_{i, j},\hat{E}_{i, j})]   = \sum_{1 \leq i < j \leq n} \mathbb{P}(E_{i, j} \oplus \hat{E}_{i, j} = 1) \geq \sum_{1 \leq i < j \leq n} d_{i, j}.\label{PerEdgeDistortion}
\end{align}
A lower bound on $R(D)$ can be obtained by evaluating the mutual information $I(G_n;\hat{G}_n)$ as follows

\begin{align}
I(G_n;\hat{G}_n) &= H(G_n) -H(G_n| \hat{G}_n)\nonumber\\
    &= \sum_{i<j} H(E_{i, j}) - \sum_{i<j} H(E_{i, j}|E_{1,2},E_{1,3}, \dots, E_{i-1,j},\hat{G}_n) \label{eq:11}\\
    &\geq \sum_{i<j} H(E_{i, j}) - \sum_{i<j} H(E_{i, j}|\hat{E}_{i, j}) \label{eq:12}\\
    & = \sum_{i<j} I(E_{i, j}; \hat{E}_{i, j})\nonumber \\
    & = \sum_{i<j} \max \{ 0, H(E_{i, j}) - H(E_{i, j} \oplus \hat{E}_{i, j}|\hat{E}_{i, j})\}  \label{mutualInfo}\\
    &\geq \sum_{i<j} \max \{ 0, H(E_{i, j}) - H(E_{i, j} \oplus \hat{E}_{i, j})\} \label{MutualERInh}\\
 &= \sum_{i<j} \max \{ 0, h_2(p_{i, j}) - h_2(d_{i, j})\}, \label{MutualERInh2}
\end{align}
where  (\ref{eq:11}) follows from the mutual independence of the $E_{i,j}$'s in an ER graph and the chain rule of the entropy, (\ref{mutualInfo}) follows from the fact that mutual information between any two random variables is non-negative, and (\ref{eq:12}) and (\ref{MutualERInh}) follow from the fact that conditioning reduces entropy. 

From (\ref{eq:9}), (\ref{PerEdgeDistortion}) and (\ref{MutualERInh2}), we clearly see that
\begin{align}
    R(D)  &=\min_{P_{\hat{G}_n|G_n}: \mathbb{E}[d(G_n,\hat{G}_n)] \leq D } I(G_n;\hat{G}_n) \\
    &\geq \min \left\{ \sum_{i<j} \max \{ 0, h_2(p_{i, j}) - h_2(d_{i, j})\}: \sum_{i < j} d_{i, j} \leq D \right\} \label{AchievabilityERInh}\\
     & = \sum_{i < j} h_2(p_{i, j}) - \max \left\{\sum_{i<j} \min \left\{ h_2(p_{i, j}), h_2(d_{i, j})\right\}\colon \sum_{i < j} d_{i, j} \leq D \right\}.\label{RateOptimisation}
\end{align}
We need to solve the optimisation problem
\begin{equation}
    \max \left\{\sum_{i<j} \min \left\{ h_2(p_{i, j}), h_2(d_{i, j})\right\}\colon \sum_{i < j} d_{i, j} \leq D \right\},\label{ObjectiveFunctionERinh}
\end{equation}
 where the $p_{i, j}$'s are fixed and the $d_{i, j}$'s are the optimization variables. We can easily argue that the optimization problem \eqref{ObjectiveFunctionERinh} is equivalent to the optimisation problem (in the standard form), 
\begin{equation}\label{eq:equiv:optimization}
\begin{aligned}
& \text{minimize}
& & -\sum_{i < j} h_2(d_{i, j}) \\
& \text{subject to}
& & \sum_{i < j} d_{i, j} = D, & i,j \in \mathcal{V}, \\
&&& d_{i, j} \leq \min \{p_{i, j}, 1 - p_{i, j}\}, & i,j \in \mathcal{V}.
\end{aligned}
\end{equation}
The equivalence can be shown by mapping an optimizer of \eqref{ObjectiveFunctionERinh} to a point in the constraint of \eqref{eq:equiv:optimization} such that the objective functions of both these problems evaluated at these points remain the same, and vice versa.

As in the proof of Theorem~\ref{RDFSBMcommunityknown}, we can use the Lagrangian function and the KKT conditions to derive the optimizer, which is
\begin{equation*}
        d_{i,j}^* = \min\left\{\min\{p_{i,j}, 1 - p_{i,j}\}, \mu \right\}, \end{equation*} 
where $\mu$ is chosen such that $\sum_{i,j} d_{i,j}^* \leq  D$. By plugging this solution into \eqref{RateOptimisation}, we get
\begin{align}\label{eq:lowerbound:ERinh}
    R(D)  \geq \sum_{i < j} \left[h_2(p_{i, j}) - h_2(d^{*}_{i, j})\right].
\end{align}
{\textbf{Achievability of the lower bound:}} We can show the achievability of the lower bound \eqref{eq:lowerbound:ERinh}  by finding a distribution $P_{\hat{G}_n|G_n}$ such that $ \mathbb{E}[d(G_n,\hat{G}_n)] \leq D$ and $I(G_n; \hat{G}_n)$ is the expression on the right-hand side of \eqref{eq:lowerbound:ERinh}. Similar to the SBM case, this achievability part of the proof can be completed by considering the conditional distribution 
\begin{align}
    P_{\{\hat{E}_{i,j}:{i<j}\}|\{E_{i,j}:{i<j}\}}= \prod_{i<j} \frac{P_{{E}_{i, j}|\hat{E}_{i, j}}P_{\hat{E}_{i, j}}}{P_{{E}_{i, j}}},
\end{align}
 where  $P_{{E}_{i, j}|\hat{E}_{i, j}}$ is a binary symmetric channel with the crossover probability $d^*_{i,j}$ and $P_{\hat{E}_{i, j}}(1)=\frac{p_{i,j} -d^*_{i,j}}{1 - 2d^*_{i,j}}$ and $P_{\hat{E}_{i, j}}(0)=\frac{1-p_{i,j}-d^*_{i,j}}{1 - 2d^*_{i,j}}$, which are non-negative by virtue of the way the $d^*_{i,j}$'s are defined.

\section{Conclusion and Future Work}
In this paper, we addressed the lossy compression of SBM by deriving its rate-distortion function under the Hamming distortion. Our approach involved considering the conditional rate-distortion function with community memberships as side information. In addition, we characterized the rate-distortion function of the Erd\H{o}s-R\'enyi random graph models. In both these models, the rate-distortion expressions contain a matrix $\mathbf{D}^{*}$, whose entries are chosen using a version of the ``reverse-water filling" solution.


In future work, we plan to explore the case where the distortion measure takes into account the community labels. This could lead to the problem of partial recovery in the stochastic block model \cite{Abbe2016GraphClusters}. We also aim to investigate distortion measures that effectively preserve the graph's structure during reconstruction, such as the measure considered in \cite{Bustin2022OnGraphs, Bustin2017OnMatrices}. These results could also be incorporated in applications such as graph anomaly detection through the use of coding algorithms and in reducing overhead during topology inference in networks.

\appendices

\makeatletter
\patchcmd{\@sect}{\@xsect}{\protected@edef\@currentlabelname{Appendix~\Alph{section}}\@xsect}{}{}
\makeatother

\ifCLASSOPTIONcaptionsoff
  \newpage
\fi



%

%


\bibliographystyle{ieeetr}

\bibliography{references_formatted}

\end{document}